\documentclass[a4paper,llncsarea,envcountsame,envcountsech,runningheads]{class/llncs}

\usepackage[USenglish]{babel}
\usepackage{xargs}
\usepackage{algorithm}
\usepackage{amsmath}
\usepackage{color}
\usepackage{xcolor}

\bibliographystyle{plain}

\newcommand{\y}{\vphantom{$c'$}}

\usepackage{tikz}
\usetikzlibrary{arrows.meta,decorations,decorations.pathreplacing,calc,bending,positioning, 
	chains,patterns,scopes}

\tikzstyle{box}=[draw, minimum width=2em, minimum height=1.6em, outer 
sep=0pt, inner sep=0pt ,on chain]
\tikzstyle{dottedline}=[ultra thick, loosely dotted,shorten >=1mm, shorten <=1mm]

\usetikzlibrary{chains, fit, positioning}

\newcommand{\field}[4][white]{%
	\node [box,fill=#1] (#2) {#3};
	\node [above=0 of #2] (#2i) {#4};%
}

\newcommand{\fieldt}[3]{%
	\node [box] (#1) {#2};
	\node [above=0 of #1] (#1i) {#3};%
}

\def\nameFields#1#2#3{%
	\draw[decorate,decoration={brace, amplitude=5pt, raise=2pt, mirror}]
	(#1.south west) + (0,0) to node[black,midway,below=6pt] {#3} (#2.south east) + (0, 0);%
}

\def\sep#1{%
	\draw(#1.east) -- + (0, 8mm);
	\draw(#1.east) -- + (0,-8mm);
}


\usepackage[inline]{enumitem}

\newboolean{isConferenceVersion}

\setboolean{isConferenceVersion}{false}

\long\def\full#1{\ifthenelse{\boolean{isConferenceVersion}}{}{{{#1}}}}
\long\def\conf#1{\ifthenelse{\boolean{isConferenceVersion}}{{{#1}}}{}}
	
\newcommand{\Nat}{I\!\!N}

\definecolor{khaki}{RGB}{255,255,204}
\definecolor{darkgreen}{rgb}{0.0,0.5,0.0}
\definecolor{darkblue}{rgb}{0.0,0.0,0.7}
\definecolor{lightblue}{RGB}{102,204,255}
\definecolor{purple}{RGB}{230,230,250}
\definecolor{lighterblue}{RGB}{102,204,255}
\definecolor{lightestblue}{RGB}{153,234,255}
\definecolor{lightred}{RGB}{255,71,71}
\definecolor{lighterred}{RGB}{255,144,144}
\definecolor{lightestred}{RGB}{255,204,204}
\definecolor{notgreenish}{rgb}{0.7,0.5,0.0}
\definecolor{greenish}{rgb}{0.2,0.5,0.2}
\definecolor{lightgreen}{RGB}{153,255,102}
\definecolor{red}{RGB}{255,152,152}

\title{Linear-Time In-Place DFS and BFS on~the~Word~RAM}

\author{Frank Kammer and Andrej Sajenko}
\institute{THM, University of Applied Sciences Mittelhessen, 
	Germany\
	\email{\{frank.kammer,andrej.sajenko\}@mni.thm.de}}

\begin{document}
\thispagestyle{plain}
\pagestyle{plain}

\maketitle{}%

\begin{abstract}%
 We present an in-place depth first search (DFS) 
 and an in-place breadth first search (BFS)
that runs on a word RAM in linear time such that, if the adjacency arrays of the input graph are given in a sorted order, the input is restored after running the algorithm.
 To obtain our results we use properties of the representation used to store the given graph and show 
 several linear-time in-place graph
 transformations from 
 one representation into another.
%
%
\end{abstract}
\begin{keywords}
	space efficient, depth first search, breadth first search, restore model
\end{keywords}
\section{Introduction}

\thispagestyle{plain}
Motivated by the rapid growth of the data sizes in nowadays applications, algorithms that are 
designed to efficiently utilize both time and space are becoming more and more important.
Another reason for the need of such algorithms is the limitation in the memory sizes of the tiniest 
devices.

To measure the total amount of memory that an algorithm requires we distinguish two types of 
memory. 
The memory that stores the input is called the \textit{input memory}.
The memory that an algorithm additionally occupies during the computation is called the 
\textit{working memory}.

Several models of computation have been considered for the case when writing in the input memory is 
restricted. 
In the \emph{multi-pass streaming} model~\cite{MunP80} the input is assumed to be held in
a read-only sequentially-accessible media, and the \full{main }optimization target is the number of passes an 
algorithm makes over the input.
In the {\em word RAM}~\cite{Fre87} the memory is partitioned into
randomly-accessible {\em words}, each of size $w$, the input is in the 
first $N\in \Nat$ words and reading/writing a word as well as the arithmetic operations 
(addition, subtraction, multiplication and bit-shift) take constant time 
if applied on inputs that fit into a word. As usual, we assume 
$w =
\Omega(\log N)$. 
In the \emph{read-only word RAM}~\cite{Fre87} the input memory is assumed to be read-only.
Another model allows data in the input memory to be permuted, but not destroyed \cite{BroIKMMT04}.
A variant of the latter model is called the \emph{restore model} \cite{ChaMR14}
where the input memory is allowed to be modified during the process of answering a
query, \mbox{but it has to be restored to its original state afterwards.} 

There are several algorithms for the read-only word RAM, e.g., for
sorting~\cite{Bea91,PagR98},  
geometric problems~\cite{AsaBBKMRS14,BarKLSS13},
or 
graph
algorithms~\conf{\cite{AsaIKKOOSTU14,ChaRS16,DatKM16,ElmHK15,HagK16}}\full{\cite{AsaIKKOOSTU14,ChaRS16,DatKM16,ElmHK15,HagK16,HagKL18,KamKL16}}.
Unfortunately, most of the algorithms on $n$-vertex graphs (including {\em depth first search} (DFS) and {\em breadth first search} (BFS)) have to use 
roughly $\Omega(n)$ bits
of working memory in the read-only RAM model since there is a lower bound for the {\em reachability 
problem}, i.e., the problem to find out if two given vertices of a given graph are in the
same connected component. 
The lower bound essentially says that we can solve reachability
in polynomial time only if we have roughly $\Theta(n)$ bits of working memory~\cite{EdmPA99}. 

Our focus is to find {\em space-efficient algorithms}, i.e., algorithms that 1.) run (almost) as fast 
as the best known 
algorithms for the problem without any space limitations and that 2.) use
space economically.
To bypass the lower bound 
we consider \textit{in-place} algorithms.
An in-place algorithm~\cite{ChanMR18} can use the input memory and the working memory for
writing, and 
the result of the algorithm may be written to the input or can be 
sent to an output stream. Moreover, the working memory size is restricted to $O(1)$ words. 
Sorting algorithms like heapsort and bubblesort are classic examples of in-place
algorithms.
 
Usually, one runs several computations on a given graph. To allow the input to be reused 
\full{for further 
computations, }we want to run our algorithms on the {\em weak restore word RAM}, i.e., 
given the input in a specific representation, as for example the sorted representation in the next section, it can be restored.

Graph algorithms usually do not specify the input format 
of a given graph since
linear time \full{and a linear number of words in the
working memory are}\conf{is}   
sufficient to convert between any two   
reasonable adjacency-list representations---e.g.,
reorder the adjacency arrays with
radix sort. However, since we focus on
linear-time in-place algorithms for DFS and BFS in the 
weak restore word RAM,
we have to be more specific about the
input format. 
Implementing 
an in-place algorithm on the weak restore word RAM model where the working memory
is limited and the input memory must be restored, a trick is to use
the redundancy in the input representation.
Thus, the size of the input representation is very crucial.
In the following, let $n$ and $m$ be the number of vertices and edges,
respectively, of the given graph.

We are not aware of a linear-time DFS or BFS that runs in-place or uses this model.
However, Chakra\-borty et al.~\cite{ChaMRS17} introduced another model
where the adjacency arrays of a graph can be only rotated, but a restoration is not required. 
In their model, they recently showed that one can run an in-place DFS and a BFS in $O(n^3 \log n)$ 
time on an arbitrary graph.
The space required to represent the graph is not mentioned explicitly, 
but based on their description 
they require at least $(n + 2m+\min\{n,m/w\})$ words for undirected graphs
since each undirected 
edge 
is stored at both endpoints and since an adjacency array is used for each
vertex where the size of the array must be known.
Moreover, their representation for directed graphs uses
at least $(2n + 2m + 2\min\{n,m/w\})$ words since adjacency arrays for 
in- and 
out-edges 
are stored for each vertex.


We use the weak restore word RAM to show linear-time, in-place algorithms for both DFS and BFS that 
runs on a 
graph with a representation consisting of only $(n+m+2)$ words
on directed graphs and $(n+2m+2)$ words on undirected graphs (each
undirected edge occurs at both endpoints).
To operate efficiently on that compact representation and to have also some
kind of redundancy, we assume that 
the order and the content of the adjacency arrays are sorted as defined more precisely
in the next section.


%
\section{Representation}
To show our results we use different representations of the 
given $n$-vertex graph $G = (V, E)$ with $V = \{1, \ldots, n\}$
that all need the same amount of memory.
We next present 
different graph representations.


In our {\em sorted standard representation} (Fig.~\ref{fig:standardRepresentation}), we first store
the number of vertices and
 a {\em table of pointers} $T$ with one
pointer per vertex that points to the adjacency array of the vertex. Subsequently, we store the total length of the 
adjacency arrays.
We additionally
assume for the sorted standard representation that the adjacency array of vertex $i$ is stored 
before the adjacency array of vertex $i+1$ for all $i=1,\ldots,n-1$ and that
all vertices inside an adjacency array are also 
stored in ascending order.
If the adjacency array of a vertex is not given in ascending order,
then it can be sorted using an in-place linear-time radix sort~\cite{FraMP07}.
However, in this case, we cannot restore to the
representation of the given graph.

This representation is economical in space and implicitly contains the information to compute the 
degree of each vertex $v \in V$. The degree $\deg(v)$ of a vertex $v$ equals the length of its adjacency array, and since the adjacency array of a vertex $v$ is written directly before the adjacency array of a vertex $v + 1$, the degree of $v$ equals the pointer differences of $T[v]$ and $T[v + 1]$ for all $v \in V \setminus \{ n \}$. For the last vertex $v = n$ the degree equals the difference of the pointer $T[v]$ and the total length of the array $n + m + 2$ with $n = A[0]$ and $m = A[n + 1]$.
If a vertex $v \in V \setminus \{ n \}$ has degree zero, then its adjacency array is empty and therefore $T[v]$ and $T[v + 1]$ point at the same position.

For our DFS described subsequently, we require to encode information like the state of visited and unvisited vertices.
To be able to do this we transform the sorted standard representation first into a so-called {\em adjacency-array begin-pointer representation} or short the {\em begin-pointer representation} and finally into a so-called  {\em swapped begin-pointer representation}. 


We obtain the  {\em begin-pointer representation} (Fig.~\ref{fig:beginPointerRepresentation})\full{ (Lemma~\ref{lem:beginpointerrep})}
by taking the sorted standard representation and replacing each vertex name $v$ in the adjacency arrays by a pointer to the beginning of the adjacency array of vertex $v$.
Since a vertex of degree zero does not have an adjacency array, we cannot create a pointer into it.
In this case we keep the vertex name, but we mark such a vertex by replacing its pointer in the table $T$ by a self reference, i.e., set $T[v] = v$.

\full{\begin{lemma}\label{lem:beginpointerrep}
	There is an in-place transformation from the sorted standard representation 
	to the begin-pointer representation that runs in  
	linear time.
\end{lemma}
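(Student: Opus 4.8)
The plan is to perform the transformation with two left-to-right in-place scans, each using $O(1)$ words of working memory, for a total running time of $O(n+m)$. The key observation that makes an in-place rewrite possible is that the two kinds of values occurring in a finished adjacency array live in disjoint ranges: a begin-pointer points into the data region $A[n+2..n+m+1]$ and is therefore at least $n+2$, whereas a retained (degree-zero) vertex name is at most $n$. Hence a single comparison against $n$ decides, for any array entry as well as for any table entry $T[v]$, whether it is a name or a pointer. I will use this both to mark degree-zero vertices by a self reference and to decide how to rewrite each neighbor.

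First I would mark the degree-zero vertices. Scanning $v = 1, \ldots, n$, I compute $\deg(v) = T[v+1] - T[v]$ for $v < n$ and $\deg(n) = (n+m+2) - T[n]$, and whenever $\deg(v)=0$ I set $T[v] = v$. The point to verify is that every degree computation uses the \emph{original} pointer values even though the scan overwrites table entries as it goes: entry $T[j]$ is written only in step $j$, so when step $v$ reads $T[v]$ (at the very start of the step) and $T[v+1]$ (an index larger than $v$), both are still untouched. Consequently exactly the empty adjacency arrays are marked, in $O(n)$ time and constant space.

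Then I would rewrite the neighbor entries. A single pass over the data region $A[n+2..n+m+1]$ suffices: for each stored neighbor $x$ I inspect $T[x]$; if $T[x]=x$ (equivalently $T[x]\le n$) then $x$ has degree zero and the name is kept in place, and otherwise $x$ is replaced by the begin-pointer $T[x] \ge n+2$. In this phase the table $T$ is only read, and every data position is read once---still holding its original name $\le n$---before being overwritten, so freshly written pointers never interfere with names that are still to be processed. This costs $O(m)$ time and constant space, which together with the first pass yields the claimed linear-time in-place bound.

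The main obstacle is the first pass: marking must not destroy the pointer information needed to compute degrees, and since consecutive degree-zero vertices share one begin-pointer, a naive test such as $T[v]=T[v+1]$ would become unreliable as soon as entries start being overwritten. Fixing the read/write order (modify $T[j]$ only in step $j$) is what makes the marking correct. A second, smaller point to get right is that degree-zero neighbors genuinely occur---in the directed representation an out-neighbor may itself have out-degree zero---which is precisely why the self-reference convention together with the disjoint-range test is needed; in the undirected case every neighbor has degree at least one, but the same procedure handles both cases uniformly.
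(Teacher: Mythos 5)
Your proposal is correct and matches the paper's own proof in essence: the paper likewise sets $T[v]=v$ for every degree-zero vertex and then replaces each adjacency entry $A[i]$ by $T[A[i]]$ in a single linear pass, which after the self-reference marking automatically keeps names of degree-zero neighbors, exactly as your disjoint-range test does. Your write-up is merely more explicit about the read/write ordering in the marking pass, a subtlety the paper's two-line proof glosses over.
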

\begin{proof}
	The begin-pointer representation can be computed very easily. Iterate over
	all adjacency arrays and replace each entry $A[i] = T[A[i]]$, with $i \in \{n + 2, \ldots, n + m + 2\}$.
	Also set $T[v] = v$ for each vertex $v$ of degree zero.
	\qed
\end{proof}}

In the begin-pointer representation we can jump from one adjacency array into another, but lack the ability to find out the vertex name of the adjacency array in constant time if we jump into it using some edge.
To resolve this issue we use the swapped begin-pointer representation (Fig.~\ref{fig:beginPointerVertexName}) where we swap the first adjacency pointer of a vertex $v$ by $v$ and move the pointer stored there into the table $T$ of position~$v$\full{ (Lemma~\ref{lem:swapunswap})}. 
In this representation we are still able to access the moved pointer by a lookup at $T[v]$, and know immediately to which vertex the adjacency belongs to.

\full{\begin{lemma}\label{lem:swapunswap}
	There is an in-place transformation that swaps and unswaps a representation in linear
	time.
\end{lemma}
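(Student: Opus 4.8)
The plan is to realise both the swap and the unswap as one left-to-right linear scan each, using $O(1)$ words of working memory, and to exploit the fact that vertex names and begin pointers occupy disjoint value ranges. Throughout, recall the layout $A[0]=n$, the table $T[v]=A[v]$ in the cells $1,\dots,n$, then $A[n+1]=m$, and the adjacency arrays in positions $n+2,\dots,n+m+1$. For a vertex $v$ of positive degree write $p_v=T[v]$ for its begin pointer; its array starts at position $p_v$, and $A[p_v]$ holds the begin pointer $f_v$ of $v$'s first neighbour. A degree-zero vertex is marked by $T[v]=v$.

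\emph{Swap.} I would scan $v=1,\dots,n$ and, whenever $A[v]\neq v$ (i.e. $v$ has positive degree), set $p:=A[v]$, then $A[v]:=A[p]$ and $A[p]:=v$. This moves $f_v$ into the table cell $v$ and writes the owner name $v$ into the first slot $p_v$, which is exactly the swapped representation. The step touches only the cells $v\le n$ and $p_v\ge n+2$, and distinct positive-degree vertices own distinct table cells and distinct array starts, so the writes for one vertex never clobber the cells read for another; hence the order is irrelevant and the pass is linear. I would also record that $A[v]=v$ holds \emph{precisely} for degree-zero vertices even in the presence of self-loops, since a positive-degree $v$ always has $f_v\neq v$ (a begin pointer is $\ge n+2>v$, and a name kept as a degree-zero reference is that of a neighbour $\neq v$).

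\emph{Unswap.} This is the delicate direction and the main obstacle, because swapping has destroyed $p_v$ in the table cell, so each array start must be recovered from scratch. The key observation is that, after swapping, a positive-degree vertex name $v$ occurs in the region $[n+2,n+m+1]$ exactly once, namely as the owner marker at position $p_v$; every other value in $\{1,\dots,n\}$ appearing there is a degree-zero reference, and every begin pointer lies in $\{n+2,\dots,n+m+1\}$. Moreover, for a name $x$ one can test whether $x$ has positive degree by checking $A[x]\neq x$, and this test stays valid throughout the unswap, since degree-zero cells are never modified while a positive-degree cell only ever holds $f_x$ or $p_x$, both $\neq x$. I would therefore scan $c=n+2,\dots,n+m+1$, read $x:=A[c]$, and whenever $x\le n$ and $A[x]\neq x$ conclude that $c=p_x$ is an owner marker; I then restore $A[c]:=A[x]$ (moving $f_x$ back into the first slot) and $A[x]:=c$ (restoring the begin pointer). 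Because owner markers are hit in increasing position order and each positive-degree name appears only once, every vertex is restored exactly once, the classification never misfires, and the pass is again linear.

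Finally I would verify the invariants that make this correct: the header cells $A[0]$ and $A[n+1]$ are touched by neither direction; all neighbour pointers except the first are left untouched, while the first neighbour pointer merely shuttles between $A[p_v]$ and the table cell $A[v]$; and the two maps are mutually inverse, so composing swap with unswap returns the begin-pointer representation unchanged. I expect the only case needing care to be a degree-zero first neighbour, where $f_v$ is itself a name rather than a pointer, but the disjoint-range argument together with the $A[x]=x$ test already handles it.
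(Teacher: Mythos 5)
Your proposal is correct and follows essentially the same route as the paper's proof: the swap is the identical single pass over $T$ (for each positive-degree $v$, exchange $T[v]$ with $A[T[v]]$), and the unswap is the identical scan over the adjacency region that recognizes an owner marker by the test $x \le n \land A[x] \ne x$ (the paper writes this as $v \ne T[v]$) and reverses the exchange. The paper dismisses correctness with ``clearly,'' whereas you additionally verify the facts it leaves implicit---uniqueness of owner markers, stability of the degree-zero test under updates, and the degree-zero first-neighbour case---so no gap remains.
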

\begin{proof}
	Clearly, we can swap a representation
	by iterating once through $T$ and setting $T[v] = A[p]$, with $p = T[v]$ for $\forall v \in V : v \ne T[v]$, and setting $A[p] = v$.
	To unswap a representation, iterate over all adjacency arrays to find all the vertex names $v = A[i] : v \ne T[v]$ for $i \in \{n + 2, \ldots, n + m + 2\}$ and reverse the swap by setting $A[i] = T[v]$ and $T[v] = i$.
	\qed
\end{proof}}

\conf{In our full version of the paper~\cite{KamS18full}, we show that in-place linear-time transformations between all kinds of representation exist.}

\begin{figure}[h!]
	\centering
	\resizebox {\columnwidth} {!} {
		\begin{tikzpicture}[
start chain = going right,
node distance = 0pt],
\begin{scope}[local bounding box=scope1]

\field{0}{5}{0}
\node[left=0 of 0](a){$A$};
\sep{0}
\node[below=0 of 0](n){$n$};
\field[khaki]{1}{7}{1}
\field[lightblue]{2}{9}{2}
\field[green]{3}{12}{3}
\field[yellow]{4}{14}{4}
\field[orange]{5}{17}{5}
\nameFields{1}{5}{$T$}
\sep{5}
\field{6}{12}{6}
\node[below=0 of 6](e){$2m$};
\sep{6}
\field[khaki]{7}{2}{7}
\field[khaki]{8}{5}{8}
\field[lightblue]{9}{1}{9}
\field[lightblue]{10}{3}{10}
\field[lightblue]{11}{4}{11}
\field[green]{12}{2}{12}
\field[green]{13}{4}{13}
\field[yellow]{14}{2}{14}
\field[yellow]{15}{3}{15}
\field[yellow]{16}{5}{16}
\field[orange]{17}{1}{17}
\field[orange]{18}{4}{18}

\end{scope}
\begin{scope}[-{Stealth[length = 2.5pt]}]%
\end{scope}
\end{tikzpicture}
	}
	\caption{Sorted standard representation of a graph with $m$ undirected or $2m$ directed 
		edges.}\label{fig:standardRepresentation}
	\vspace{4mm}
	
%
	\resizebox {\columnwidth} {!} {
	\begin{tikzpicture}[
	start chain = going right,
	node distance = 0pt],
	\begin{scope}[local bounding box=scope1]

	\field{0}{5}{0}
	\node[left=0 of 0](a){$A$};
	\sep{0}
	\node[below=0 of 0](n){$n$};
	\field[khaki]{1}{7}{1}
	\field[lightblue]{2}{9}{2}
	\field[green]{3}{12}{3}
	\field[yellow]{4}{14}{4}
	\field[orange]{5}{17}{5}
	\nameFields{1}{5}{$T$}
	\sep{5}
	\field{6}{12}{6}
	\node[below=0 of 6](e){$2m$};
	\sep{6}
	\field[khaki]{7}{9}{\textbf{7}}
	\field[khaki]{8}{17}{8}
	\field[lightblue]{9}{7}{\textbf{9}}
	\field[lightblue]{10}{12}{10}
	\field[lightblue]{11}{14}{11}
	\field[green]{12}{9}{\textbf{12}}
	\field[green]{13}{14}{13}
	\field[yellow]{14}{9}{\textbf{14}}
	\field[yellow]{15}{12}{15}
	\field[yellow]{16}{17}{16}
	\field[orange]{17}{7}{\textbf{17}}
	\field[orange]{18}{14}{18}

	\end{scope}
	\begin{scope}[-{Stealth[length = 2.5pt]}]%
	\end{scope}
\end{tikzpicture}
}
\caption{Begin-pointer representation of the graph from
	Fig.~\ref{fig:standardRepresentation}.
	Every adjacency array entry $v$ is replaced with the pointer $p = T[v]$
	to the first position of $v$'s adjacency array.
}\label{fig:beginPointerRepresentation}
\vspace{4mm}
%
%
	\centering
       
       	\centering
       	\resizebox {\columnwidth} {!} {
       		\begin{tikzpicture}[
start chain = going right,
node distance = 0pt],
\begin{scope}[local bounding box=scope1]

\field{0}{5}{0}
\node[left=0 of 0](a){$A$};
\sep{0}
\node[below=0 of 0](n){$n$};
\field[khaki]{1}{9}{1}
\field[lightblue]{2}{7}{2}
\field[green]{3}{9}{3}
\field[yellow]{4}{9}{4}
\field[orange]{5}{7}{5}
\nameFields{1}{5}{$T$}
\sep{5}
\field{6}{12}{6}
\node[below=0 of 6](e){$2m$};
\sep{6}
\field[khaki]{7}{\textbf{1}}{7}
\field[khaki]{8}{17}{8}
\field[lightblue]{9}{\textbf{2}}{9}
\field[lightblue]{10}{12}{10}
\field[lightblue]{11}{14}{11}
\field[green]{12}{\textbf{3}}{12}
\field[green]{13}{14}{13}
\field[yellow]{14}{\textbf{4}}{14}
\field[yellow]{15}{12}{15}
\field[yellow]{16}{17}{16}
\field[orange]{17}{\textbf{5}}{17}
\field[orange]{18}{14}{18}

\end{scope}
\begin{scope}[-{Stealth[length = 2.5pt]}]%
\end{scope}
\end{tikzpicture}
       	}
       	\caption{Swapped begin-pointer representation 
       		of the graph in Fig.~\ref{fig:standardRepresentation}.}
       	\label{fig:beginPointerVertexName}
\end{figure}

\full{It remains to describe how to restore the sorted standard representation (Lemma~\ref{lem:restoresrepresentation}).
If the given representation is not swapped, then make it swapped.
Iterate then over all adjacency arrays and replace each pointer that is not a vertex name by the vertex name it points at.
Finally, unswap the representation and correct the entries of the vertices having degree zero.

\begin{lemma}\label{lem:restoresrepresentation}
	There is an in-place transformation from the begin-pointer
	representation to the sorted standard representation that runs in  
	linear time.
\end{lemma}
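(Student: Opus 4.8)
The plan is to route the transformation through the swapped begin-pointer representation, because there the first cell $A[\mathrm{begin}(v)]$ of the adjacency array of every nonzero-degree vertex~$v$ stores the \emph{owner name}~$v$. This gives a constant-time inverse of the begin pointers: for any pointer $p$ occurring inside an adjacency array, the name of the vertex whose array starts at~$p$ is simply $A[p]$, so the conversion of a single pointer into a name is $O(1)$ and the whole conversion is linear. Equally important is the value-range separation guaranteed by the representation: a vertex name lies in $\{1,\dots,n\}$, every begin pointer lies in $\{n+2,\dots,n+m+1\}$, and the value $n+1$ never occurs; hence, as long as a cell has not been rewritten, a name is told from a pointer by one comparison with~$n$.

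First I would, if necessary, bring the input into swapped form by Lemma~\ref{lem:swapunswap}. Next I would make one pass over the adjacency region and replace every \emph{interior} cell, i.e., every cell holding a pointer $p\ge n+2$, by the owner name $A[p]$. Since this pass touches only interior cells and never a first cell, all owner names stay in place and every lookup $A[p]$ returns the correct name; the value-range test also keeps the still-to-be-processed interior cells (pointers) distinguishable from the first cells (owner names), which are left untouched. In a companion pass over~$T$ I would, for each $v$ with $T[v]\ne v$ (the nonzero-degree vertices), replace the first-neighbor pointer in $T[v]$ by the name $A[T[v]]$ of the first neighbor of~$v$; this again reads only owner names, so it is correct.

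It then remains to repair the first cells and the table~$T$, which is conceptually the unswap of Lemma~\ref{lem:swapunswap}: for every nonzero-degree~$v$ we move the (now name-valued) content of $T[v]$ into $A[\mathrm{begin}(v)]$ and store $\mathrm{begin}(v)$ back into $T[v]$. \textbf{The hard part} is this last phase in place. Once the interior cells carry names, a first cell is no longer recognizable from its value alone, so the unswap can no longer locate the positions $\mathrm{begin}(v)$ by the value test it relied on, and converting a first cell earlier would destroy an owner name that a later interior lookup still needs. I would overcome this using the redundancy of the representation together with the word size $w=\Omega(\log N)$: while converting the interior cells — at which moment the first cells, still holding owner names $\le n$, are distinguishable from the not-yet-converted pointers $\ge n+2$ — I would tag each converted interior cell, for instance by setting a spare bit or by exploiting the otherwise unused value~$n+1$, so that afterwards the first cells are exactly the untagged cells with value at most~$n$. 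A final left-to-right sweep then visits the first cells in increasing order of their owner, writes the first-neighbor name held in $T[v]$ into $A[\mathrm{begin}(v)]$, sets $T[v]=\mathrm{begin}(v)$, and clears the tags.

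The only vertices left are those of degree zero, still marked by $T[v]=v$ and hence recognizable because every repaired pointer is at least $n+2$. I would fix them by one right-to-left scan that sets $T[v]=T[v+1]$ for each such~$v$, taking for $v=n$ the sentinel $T[n+1]=A[0]+A[n+1]+2$; this gives each empty array the begin pointer of the following array, as the sorted standard representation requires. Every pass is linear and uses only a constant number of extra words, so the whole transformation is in place and runs in linear time.
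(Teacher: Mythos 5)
Your proposal follows the same overall route as the paper's proof---pass to the swapped form via Lemma~\ref{lem:swapunswap}, convert every interior pointer $p$ to the owner name $A[p]$, convert $T$ likewise, then unswap and repair the degree-zero vertices---but it diverges precisely at the step where the paper's proof is weakest, and your version is the more careful one. The paper's unswap locates $\mathrm{begin}(v)$ purely by value: it processes $v=1,\ldots,n$ with $T[v]\ne v$ in lockstep with a left-to-right scan and swaps at the next position $p$ with $A[p]=v$. As you observe, once interior cells carry names this test is ambiguous, and it genuinely fails: take $n=3$ with edges $1\to 2$, $1\to 3$, $3\to 1$ and vertex $2$ of degree zero; after conversion the interior cell of vertex $1$'s array holds the name $3$ and precedes $\mathrm{begin}(3)$, so the scan swaps at the wrong position. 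Your insistence on making first cells recognizable is therefore not pedantry but a repair of the paper's own argument (your right-to-left degree-zero fix with the sentinel $n+m+2$ is likewise more robust than the paper's forward scan, which writes a vertex index $i'$ rather than a pointer and has no predecessor for leading isolated vertices). Two tightenings are needed, though. First, your tag mechanism is under-specified as stated: $w=\Omega(\log N)$ does not guarantee a spare bit per word, and the single unused value $n+1$ cannot mark a linear number of cells; the clean instantiation is to store $\mathrm{name}+(n+1)$ in each converted interior cell---such cells are never read again, since every begin pointer targets a first cell, which keeps its value at most $n$ until the final sweep---and to subtract the offset during that sweep, all values staying below $2n+2$ and hence in a word. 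Second, guard the companion pass over $T$ by $T[v]>n+1$ rather than $T[v]\ne v$: when $v$'s first neighbor has degree zero, $T[v]$ already holds a name, and computing $A[T[v]]$ would read inside the table. With these repairs your argument is complete, and arguably sounder than the one in the paper.
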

\begin{proof}
	In the first step replace the pointers in the adjacency entries by the vertex name they point at, i.e., for all $i \in \{n + 2, \ldots, n + m + 2\}$ with $n < A[i]$ set $A[i] = A[A[i]]$.
	Now do the same in the array $T$, i.e., for all $i \in \{1,\ldots, n\}$ set $T[i] = A[T[i]]$.
	At this point all the pointers are replaced by vertex names and it remains to unswap the representation.
	Iterate over all adjacency arrays and, beginning with the first vertex $v  = 1, \ldots, n$ with $T[v] \ne v$, look for a position $p$ with $v = A[p]$ and set $A[p] = T[v]$ and $T[v] = p$.
	Now it remains to restore the vertices of degree zero, which we do by iterating with $i = \{1, \ldots, n\}$ over $T$ and remember the last $i' = i$ with $T[i] > n$.
	Whenever encountering an entry $T[i] = i$ set $T[i] = i'$.
	\qed
\end{proof}}

\section{Depth-First Search}
Usually a DFS is only an algorithmic scheme how a graph can be explored step by step and does nothing useful.
Its usefulness comes in combination with additional computational steps that are defined by a user for a specific application.
These steps can be encapsulated in functions that we call {\em user-implemented functions}.

To introduce the user-implemented functions \texttt{pre-} and \texttt{postprocess} as well as
 \texttt{pre-} and \texttt{postexplore} we start to sketch their usage in a standard DFS.
Initially all vertices of a graph are unvisited, also called {\em white}.
The algorithm starts by visiting a start vertex $u$.
Whenever a DFS visits a vertex $u$ for the first time it colors $u$ {\em gray} to mark it as visited and executes \texttt{preprocess}$(u)$.
For each outgoing edge $(u,v)$ of $u$, it first calls \texttt{preexplore}$(u,v)$ and
second visits vertex $v$ if $v$ is white.
When finally 
$v$ has no outgoing white neighbors, it marks $v$ as done by coloring it
{\em black} and calls \texttt{postprocess}($v$) and backtracks to 
the parent~$u$.
After backtracking from $v$ to $u$ the algorithm calls \texttt{postexplore}($u, v$).

By using suitable implementations for the four user-implemented functions, the user knows exactly how the exploration takes place and can easily output, e.g., the vertices in pre-, post-, or inorder with respect to the constructed DFS tree.
Not every DFS algorithm supports all these functions. Thus, we can also measure the usefulness of a DFS implementation by the number of supported functions.

To obtain a linear-time in-place DFS on directed graphs, we cannot support calls of 
the functions \texttt{preexplore} and \texttt{postexplore}, which are often not necessary, i.e., to compute pre- 
and post-order.


We now start the description of our DFS algorithm where we expect the graph
being given in the swapped begin-pointer representation.  Our goal is to
encode two information in the representation, but with the knowledge that we
have to restore the representation later.  First, we need to encode the
color of each vertex.  Instead of encoding all three colors we use only the
colors white and {\em gray-black} (as gray or black).  Second, we require to
encode the path that we took to reach a vertex such that we are able to
backtrack to a parent vertex and continue the exploration from there.

For simplicity, we \full{first }assume that every vertex of the directed graph has at least two
neighbors, and we so can conclude that
every pointer in the adjacency arrays points at a position
storing a vertex name $v \in V = \{1, \ldots, n\}$.
\full{Afterwards we show how to handle degree zero and one vertices.}
\conf{In our full version of the paper~\cite{KamS18full}, we describe how to handle vertices of degree one and zero.}

\full{\subsection{Handling Vertices of Degree at Least Two}}
Our idea is to store the colors of the vertices implicitly by using the following invariant: 
A vertex $v$ is white exactly if the first pointer $p$ in the adjacency array of $v$, which is stored in $T[v]$, points at a value at most $n$, i.e., $A[p] \le n$. By our conclusion this is initially true for all vertices.
\full{

}%
%
We next want to enable the algorithm to backtrack from a visited vertex to its parent.
Whenever a DFS takes a path from a vertex $u$ to a vertex $v$ it has to return to the vertex $u$ from $v$, i.e., backtrack from $v$ to $u$, if all white neighbors of $v$ are visited.
Our idea is to reverse the path from vertex $u$ to the vertex $v$ whenever we visit a white vertex $v$ by using so-called {\em reverse pointers}.
In other words, the idea is to turn
the pointer to $v$ in $u$'s adjacency array to a pointer to $u$ in $v$'s adjacency array.

Now we describe the construction of a reserve pointer in detail.
See also Fig.~\ref{fig:someLabel}.
Assume that our DFS currently visits a vertex $u$, and we iterate through $u$'s adjacency array.
Iterating over $u$'s adjacency array, e.g., at a position $p$, we find a pointer $q$ pointing into an adjacency array of a white vertex $v = A[q]$.
Inside $v$'s adjacency array the first pointer that we have to inspect is $q' = T[v]$.
Because we know that we left from position $p$ to $q$ to reach $v$, we want to store a pointer to $p$ as a reverse pointer from $v$ to $u$. (Returning to $u$, the algorithm can continue exploring $u$'s adjacency array from $p + 1$.)
We store $p$ inside $T[v]$. 
The pointer $p$ is now the reverse pointer from $v$ to $u$. 
Naively doing so we overwrite the pointer $q'$.
This would cause an information loss. Therefore, we have to find a new location for $q'$.
What we can observe is that when using the reverse pointer, we can restore the original pointer from $u$ to $v$ such that we do not need to keep the pointer $q$ in $A[p]$ (part of $u$'s adjacency array) as long as we have the reverse pointer. Hence, we use $A[p]$ as a temporary location to store $q'$.
Note that $q'$ is still accessible from $v$ by following the reverse pointer stored in $T[v]$.

In the example above we showed how to visit a vertex from a position $p$.
If $p$ is not the first position of $u$'s adjacency array the creation of a reverse pointer that points at $p$ has a nice side-effect: The vertex $v$ becomes gray-black since the value stored in $T[v]$ points at a value larger than $n$.

%
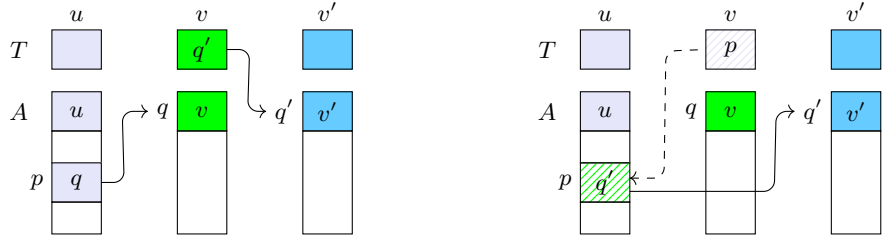
\begin{figure}[t!]%
	\centering
	\begin{tikzpicture}[
	start chain = going right,
	node distance = 0pt],
	\begin{scope}[local bounding box=scope1]
	
		\node [box, fill=purple] (a1) {$u$};
		\node [left=0.2cm of a1] (lla) {$A$};
		\node [box, minimum height=1.3em, below=0 of a1] (a2) {};
		\node [box, below=0 of a2, fill=purple] (a3) {$q$};
		\node [left=0 of a3] (ap) {$p$};
		\node [box, minimum height=1.3em, below=0 of a3] (a4) {};
		
		\node [box, right=1cm of a1,fill=green] (b1) {$v$};
		\node [left=0 of b1] (qv) {$q$};
		\node [box, minimum height=4.2em, below=0 of b1] (b2) {};
		
		\node [box, right=1cm of b1, fill=lightblue] (c1) {$v'$};
		\node [left=0 of c1] (qv2) {$q'$};
		\node [box, minimum height=4.2em, below=0 of c1] (c2) {};
		
		\path [->, draw, rounded corners](a3.east) -- ++(0.3, 0.0) -- (0.6, 0.0) -- (qv.west);

		\node [box, above=0.3cm of a1,fill=purple] (t1) {};
		\node [left=0.2cm of t1] (t0) {$T$};
		\node [box, above=0.3cm of b1,fill=green] (t2) {$q'$};
		\node [box, above=0.3cm of c1, fill=lightblue] (t3) {};
		\node [above=0cm of t1] (l1) {$u$};
		\node [above=0cm of t2] (l2) {$v$};
		\node [above=0cm of t3] (l3) {$v'$};
		
		\path [->, draw, rounded corners](t2.east) -- ++(0.3, 0.0) -- (2.25, 0.0) -- (qv2.west);
		// -----

		\node [box, right=3cm of c1, fill=purple] (d1) {$u$};
		\node [left=0.2cm of d1] (lda) {$A$};
		\node [box, minimum height=1.3em, below=0 of d1] (d2) {};
		\node [box, below=0 of d2, pattern=north east lines,pattern color=green] (d3) {$q'$};
		\node [left=0 of d3] (ap2) {$p$};
		\node [box, minimum height=1.3em, below=0 of d3] (d4) {};
		
		\node [box, right=1cm of d1, fill=green] (e1) {$v$};
		\node [left=0 of e1] (qv3) {$q$};
		\node [box, minimum height=4.2em, below=0 of e1] (e2) {};
		
		\node [box, right=1cm of e1,fill = lightblue] (f1) {$v'$};
		\node [left=0 of f1] (qv4) {$q'$};
		\node [box, minimum height=4.2em, below=0 of f1] (f2) {};

		\path [->, draw, rounded corners](d3.340) -- ++(1.9, 0.0) -- (9.2, 0.0) -- (qv4.west);
	
		\node [box, above=0.3cm of d1, fill=purple] (t11) {};
		\node [left=0.2cm of t11] (t00) {$T$};
		\node [box, above=0.3cm of e1, pattern=north east lines,pattern color=purple] (t22) {$p$};
		\node [box, above=0.3cm of f1, fill=lightblue] (t33) {};
		\node [above=0cm of t11] (l11) {$u$};
		\node [above=0cm of t22] (l22) {$v$};
		\node [above=0cm of t33] (l33) {$v'$};
		
		\path [<-, draw, dashed, rounded corners](d3.10) -- ++(0.45, 0.0) -- (7.75, 0.8) -- (t22.west);
	
	\end{scope}
	\begin{scope}[-{Stealth[length = 2.5pt]}]%
	\end{scope}
	\end{tikzpicture}
	\caption{The figure shows the state before (left) and after (right)
creating a reverse pointer.  The squares at the top are locations in $T$ and
the array bellow of each square of $T$ is the adjacency array of the vertex
written on the top.  The variables $u, v, v' $ are
vertices and $p, q, q' \ge n + 2$ are array positions / pointers.  Normal
arrows are pointers from an adjacency array into another and dashed
arrows are reverse pointers.}
%
	\label{fig:someLabel}
\end{figure}%
%


What if $p$ is the first position in $u$'s adjacency array?
Then we encounter two problems.
To handle the problems, recall that a reverse pointer of a vertex $v$ is always stored in $T[v]$.
In this scenario the reverse pointer $p = T[v]$ points to the first position of an adjacency array that stores a vertex name $u = A[p]$.
The first problem is that $v$ is no longer white because $p$ is the position of a value at most $n$.
The second problem arises when we try to temporary store the pointer $q' = T[v]$ to $A[p]$,
which stores the vertex name $u$ in our swapped representation. Alternatively, storing the pointer $q'$ in $T[u]$ overwrites the reverse pointer of vertex $u$, unless $u$ is the start vertex.

We avoid both problems by never leaving a vertex from the first position of its adjacency array.
If we have to visit a vertex by following the first pointer stored at the first position $p$, i.e, stored in $T[u]$ with $u = A[p]$, then we first swap the pointers in $T[u]$ and $A[p + 1]$ and follow afterwards the pointer stored at the second position $p + 1$.
Since the pointers in our adjacency arrays are stored in ascending order, we can check if we have swapped pointers.
Whenever we return
to a vertex that we left from 
a second position $p$ in its adjacency array 
and the value stored at $p$ is smaller than the value 
in $T[u]$ with $u = A[p - 1] \land 1 \le u \le n$, we swap 
the pointers in
$A[p]$ 
and $T[u]$ back, and follow the pointer at position $p$ to the second vertex.
This ensures that we never leave from the first adjacency position of a vertex and thus never have to store a reverse pointer pointing to a first adjacency position.

We have shown how to create reverse pointers; now it remains to describe how to remove them again. 
After exploring every neighbor of a vertex $v$, our algorithm finds the start of the adjacency array of vertex $v''$, i.e., we find a position $q''$ with $1 \le A[q''] \le n$ (or $q''$ is the end of the whole array $A$).
Note that $v'' = v + 1$, but we do not know $v$ at this point and thus, we cannot search for $v + 1$.
Now, we need to backtrack and thus find the reverse pointer of~$v$.
We find the reverse pointer $p = T[v]$ by iterating backwards until we find a position $q$ with $A[q] \le n$. In fact, then $A[q] = v$.
Now we move the temporary stored pointer $q' = A[p]$ into $T[v]$ again,
and restore the original pointer to $v$ at position $p$ by setting $A[p] = q$. However, this turns $v$ into a white vertex again,
which we solve by incrementing the first 
pointer $q' = T[v]$ of $v$ by one such that the pointer
points to a position storing a value larger than $n$.
Since we assume a degree of at least two for all vertices the incrementation has the effect that the pointer points at a value strictly greater than $n$. The incrementation is easily reversible such that the restoration is trivial.

%
%


Before we present the remaining details of our algorithm, we summarize the
possible modifications in $T$ and the adjacency arrays of the vertices in the 
following three invariants that hold before and after each call of 
\textsc{follow} and \textsc{backtrack}. Before, note that the only other operation
that changes values is \textsc{nextNeighbor}, which only swaps adjacency
pointers, but does not change colors of vertices and the invariants are not affected.

\begin{enumerate}
	\item A vertex $v$ is white exactly if $v$ is not a start vertex and $1 \le A[T[v]] \le n$.
	\item Every gray-black vertex $v$ on a current DFS path, except the start vertex, stores the reverse pointer at $T[v]$ that points into its parent adjacency array at a position $p = T[v]$ with $A[p] \ge n$. Moreover, $p$ is the position where the parent of $v$ originally stored the pointer to $v$.
	\item The first pointer $q = T[v]$ in the adjacency array of a gray-black vertex $v$ that is not on the current DFS path points with its first pointer $q = T[v]$ to the second position $q'$ of another vertex adjacency array, i.e.,  $1 \le A[q' - 1] \le n$.
\end{enumerate}

In detail, our DFS runs as follows. If a start-vertex $1 \le v_{\rm{s}} \le n$ is given, we search for the first position $p$ with $v_{\rm{s}} = A[p]$ of its adjacency array in $O(m)$ time.
Alternatively, we search for a position $p$ with $v_{\rm{s}} = A[p] \land 1 \le v_{\rm{s}} \le n$.
Then, we call \textsc{visit}($p$) that is described now.


\begin{itemize}
	\item \textsc{visit}($p$): (Visit the vertex whose adjacency array starts at position $p$.)
	In the swapped begin pointer representation, $v = A[p]$ is always 
	the vertex name.
	First, call \texttt{preprocess}($v$).
	Finally, start iterating through the neighbors starting from position $p$ by
	executing 
	$\textsc{nextNeighbor}$($p, \textsc{true}$).
	\item \textsc{nextNeighbor}($p, {\textsc{ignoreCheck}}$): (Follows the edge at position $p$
    if the opposite endpoint of the edge is white. Otherwise, it tries the position~$p+1$.)
    
	First of all, we test if $p$ is the 
        first position in the current          
        adjacency array or two position after it 
        by determining if $(\lnot 
        \textsc{ignoreCheck}\ \land (1 \le A[p] \le n))$ or if $1 \le A[p - 2] \le n$, respectively.
    If so, define $p'$ (and $p''$) such that $p'$ is the first ($p''$ is the second) position in the adjacency array and check additionally if the first pointer 
    (which is temporary stored in a parent vertex in $A[r]$ with $r = T[u], u = A[p']$), and the second pointer in $A[p'']$ 
    are swapped, which means that the first is larger than the second pointer.
    Use the information computed above and proceed with Substep~1.
\begin{enumerate}[wide, labelwidth=!, labelindent=0pt]
	\item[\em Substep 1.] If $p$ is the first entry, increment $p$ by one, swap the two
	pointers in $A[r]$ and $A[p'']$
	as well as proceed with Substep 3 to 
         visit the first neighbor
	(if white) from the second position of the adjacency array.

	If $p$ is two positions after the first entry and the two pointers are swapped, (i.e.,
        we just returned from the first neighbor), decrement $p$ by one, swap the
        two pointers as described above and also proceed with Substep 3 to 
	visit the second neighbor (if white) from the second position of the adjacency array.

        Otherwise, we just returned from the second, third, etc.\ neighbor. 
        Then, we go to Substep 2 to test if we reached the end of the current 
        adjacency array and then proceed with Substep 3.
	
	\item[\em Substep 2.] 
        We check if we require to backtrack, i.e, we 
	reached 
	the next 
	adjacency array or are out of 
	index in array $A$. Hence, check if $(1 \le A[p] \le n) \lor (p > n + m + 2)$.
	If we have to backtrack, search for the largest position $q < p$ such that $1 \le A[q] \le n$ and call 
	\textsc{backtrack}($q$) unless $A[q] = v_{\rm{s}}$.
	In that case color $v_{\rm{s}}$ gray-black by incrementing its firs adjacency pointer $T[v_{\rm{s}}]$ by one.
	We now have to explored everything reachable from $v_{\rm{s}}$.
	If wanted, start a new DFS with a next white vertex.
	
	\item[\em Substep 3.] Check if the edge at $p$ points to a white vertex $v = A[q]$ with $q = A[p]$ by running
	the non-recursive procedure \textsc{isWhite}($v$).
	If $p$ does, call $\textsc{follow}(p)$.
	Otherwise, call \textsc{nextNeigbor}$(p + 1, \textsc{false})$.
\end{enumerate}

	\item \textsc{isWhite}($v$): (Return \textsc{true} exactly if the vertex $v$ is white.) 
	We check the first invariant, i.e., return $v \ne v_{\rm{s}} \land 1 \le A[T[v]] \le n $.
	\item \textsc{follow}($p$): (Discover a new child via an edge $e$ stored at 
	position $p$ and color the new discovered vertex implicitly gray-black.)
	First we determine the position $q =A[p]$ and the vertex $v = A[q]$ where $e$ points to. 
	Second, we are going to create a reverse pointer in $T[v]$ to backtrack later. To 
	not lose the pointer previously stored in $T[v]$ we store it in $A[p]$. In detail, remember 
	the first pointer $x = T[v]$ of the neighbor.
	Now, store the pointer inside $A[p] = x$ and create a 
        reverse pointer from the neighbors first adjacency entry 
	into 
	its parent's adjacency array by setting $T[v] = p$.
	Finally, visit the neighbor by executing \textsc{visit}($q$).
	\item \textsc{backtrack}($q$): (From a child $v$ go to its parent 
        where 
        $q$ is the
        beginning of $v$'s adjacency array and
      $p = T[v]$ with $v = A[q]$
        is a reverse pointer to the adjacency array of the parent.)
	Before going to the parent, we have to restore the edges that we
 modified by visiting $v$ such that we fulfill the third invariant.
	In detail, we first restore the child's edge that was temporarily stored in the
 parent's adjacency array, but let it point one edge
	further to guarantee the third invariant. Thus, we set $T[v] = A[p] + 1$ and $A[p] = q$ with $v = A[q]$ and $p = T[v]$.
	Finally, we call \texttt{postprocess}($v$) and subsequently $\textsc{nextNeighbor}(p + 1, \textsc{false})$.
	
\end{itemize}
%

Concerning the running time on $n$-vertex $m$-edge directed graphs, we can observe that all functions of our
in-place DFS run in constant time per call. Moreover, \textsc{visit} and
\textsc{backtrack} are called $O(n)$ times whereas all other
\mbox{functions} are
called $O(m)$ times. Thus, 
our in-place DFS runs in $O(n+m)$ time.
Ignoring the calls for the user-defined functions as well as for
\textsc{isWhite}, which is not recursive,
we only make tail-calls and consequently require no recursion stack.

\full{\subsection{Handling Vertices of Degree Zero}}
\full{
%
We now focus on a vertex $v$ of degree zero.
For an illustration see Fig.~\ref{fig:visitDegZero}.
The only operation that we can do after visiting $v$ is to backtrack.
Assume that we discover $v$ from a vertex $u$ of degree at least two from position $p$.
We call $\texttt{preprocess}(v)$ and $\texttt{postprocess}(v)$.
Now it remains to mark $v$ as gray-black to avoid visiting it over other possible incoming edges.
We 
define 
a vertex of degree zero as white 
if $T[v] = 
v$ holds. Otherwise, $v$ is gray-black.
Whenever we visit $v$,  we create a reverse pointer to $u$ by 
setting $T[v] = p$---similar as we did for vertices 
of degree at least two---and so turn $v$ gray-black.
In contrast to vertices of degree at least two, we do not remove the reverse pointer when 
backtracking from $v$. Instead, we have to run a restoration after the DFS. 
Moreover, even if $v$ was discovered from a swapped pointer in $u$, 
we do not change the reverse pointer stored in $T[v]$ 
when unswapping the pointers, i.e., the reverse pointer never points to a first entry of an adjacency 
array. This helps to identify the reverse pointer during the restoration.

If $v$ is visited from a vertex $u$ of degree one, then $u$ has only one adjacency entry.
This means that we left $u$ from its first adjacency position $p$.
Creating a reverse pointer in this case will mark $v$ as white.
Instead of storing a reverse pointer for $v$ in $T[v]$ 
we (1) extend the first invariant such that 
a vertex $v$ is white if additionally $(T[v] = v)$ holds and (2)
make $v$ gray-black by storing the vertex name $u$ in $T[v]$, 
i.e., set $T[v] = A[p]$ instead of $T[v] = p$. 


\begin{figure}[t!]
	\centering
	\begin{tikzpicture}[
	start chain = going right,
	node distance = 0pt],
	\begin{scope}[local bounding box=scope1]
	
	\node [box, fill=purple] (a1) {$u$};
	\node [box, minimum height=1.3em, below=0 of a1] (a2) {};
	\node [box, below=0 of a2] (a3) {$v$};
	\node [left=0 of a3] (ap) {$p$};
	\node [box, minimum height=1.3em, below=0 of a3] (a4) {};

	\node [box, above=0.2cm of a1,fill=purple] (t1) {};
	\node [left=0.2cm of t1] (t0) {$T$};	
	\node [box, right=0.2cm of t1, fill=green] (t2) {$v$};
	\node [above=0cm of t1] (l1) {$u$};
	\node [above=0cm of t2] (l2) {$v$};
	
	\draw[->] (t2.340-50) arc (180:190+264:2.5mm);
	\path [->, draw, rounded corners](a3.east) -- ++(0.55, 0.0) -- (t2.south);
	
	
	\node [box, fill=purple, right=2.4cm of a1] (a11) {$u$};
	\node [box, minimum height=1.3em, below=0 of a11] (a22) {};
	\node [box, below=0 of a22] (a33) {$v$};
	\node [left=0 of a33] (ap1) {$p$};
	\node [box, minimum height=1.3em, below=0 of a33] (a44) {};

	\node [box, above=0.2cm of a11,fill=purple] (t11) {};
	\node [left=0.2cm of t11] (t00) {$T$};	
	\node [box, right=0.2cm of t11, pattern=north east lines,pattern color=purple] (t22) {$p$};
	\node [above=0cm of t11] (l11) {$u$};
	\node [above=0cm of t22] (l22) {$v$};
	
	\path [->, draw, rounded corners](a33.east) -- ++(0.55, 0.0) -- (t22.south);
	\path [<-, dashed, draw, rounded corners](a33.20) -- ++(0.4, 0.0) -- (t22.240);
	

	\node [box, fill=purple, right=2.5cm of a11] (a111) {$u$};
	\node [left=0 of a111] (ap11) {$p$};
	
	\node [box, above=0.2cm of a111,fill=purple] (t111) {$v$};
	\node [left=0.2cm of t111] (t000) {$T$};	
	\node [box, right=0.2cm of t111, fill=green] (t222) {$v$};
	\node [above=0cm of t111] (l111) {$u$};
	\node [above=0cm of t222] (l222) {$v$};
	
	\path [->, draw, rounded corners](a111.340) -- ++(0.525, 0.0) -- (t222.south);
	\draw[->] (t222.340-50) arc (180:190+264:2.5mm);
	
	
	\node [box, fill=purple, right=2.4cm of a111] (a1111) {$u$};
	\node [left=0 of a1111] (ap111) {$p$};
	
	\node [box, above=0.2cm of a1111,fill=purple] (t1111) {$v$};
	\node [left=0.2cm of t1111] (t0000) {$T$};	
	\node [box, right=0.2cm of t1111, pattern=north east lines,pattern color=purple] (t2222) {$u$};
	\node [above=0cm of t1111] (l1111) {$u$};
	\node [above=0cm of t2222] (l2222) {$v$};
	
	\path [<-, dashed,bend right] (t1111.south) edge (t2222.south);
	\path [->, draw, rounded corners](a1111.340) -- ++(0.525, 0.0) -- (t2222.south);
	
	\end{scope}
	\end{tikzpicture}
	\caption{The two left and two right figures show the states 
		of the representation before and after exploring a 
		vertex $v \in V$ of degree zero from a vertex $u \in V$. On the left side $u$ has degree at least two and on the 
		right $u$ has degree one.}
	\label{fig:visitDegZero}
\end{figure}
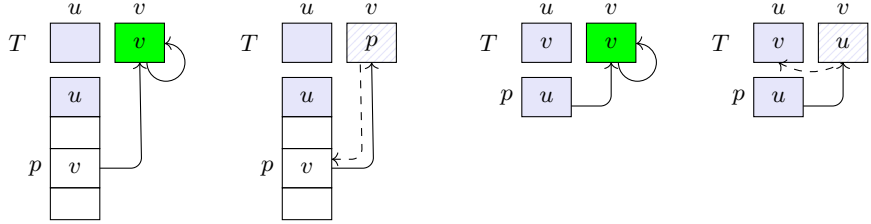}
\full{
\hphantom\\
}

\full{\subsection{Handling Vertices of Degree One}}
\full{We now focus on vertices of degree one.
When we are about to discover such a white vertex $v$ from a vertex $u$ of degree at least two.
Let $p$ be the position of the edge to $v$ in the adjacency array of $u$. 
We can visit $v$ and create a reverse pointer to $u$ by setting $T[v] = p$.
But there is a problem if we want to visit another degree one vertex $v'$
from~$v$:
we have to leave $v$ from 
the first position in $v$'s adjacency array. 

\begin{figure}[t!]
	\centering
	
	\begin{tikzpicture}[
	start chain = going right,
	node distance = 0pt],
	\begin{scope}[local bounding box=scope1]
	
	\node [box, fill=purple] (a1) {$u$};
	\node [box, minimum height=1.3em, below=0 of a1] (a2) {};
	\node [box, below=0 of a2] (a3) {$q$};
	\node [left=0 of a3] (ap) {$\bar{p} = p$};
	\node [box, minimum height=1.3em, below=0 of a3] (a4) {};
	
	\node [box, right=0.7cm of a1,fill=green] (b1) {$v$};
	\node [above=0 of b1] (qv) {$q$};
	
	\node [box, right=0.7cm of b1, fill=lightblue] (c1) {$v'$};
	\node [above=0 of c1] (c2) {$q'$};
	
	\node [box, right=0.7cm of c1, fill=orange] (x1) {$v''$};
	\node [above=0 of x1] (x2) {$q''$};
	\node [box, minimum height=4.2em, below=0 of x1] (x3) {};
	
	\path [->, draw, rounded corners](a3.east) -- ++(0.3, 0.0) -- (0.65, 0.0) -- (b1.west);
	
	
	\node [box, above=0.6cm of a1,fill=purple] (t1) {};
	\node [left=0.2cm of t1] (t0) {$T$};
	\node [box, above=0.6cm of b1, fill=green] (t2) {$q'$};
	\node [box, above=0.6cm of c1, fill=lightblue] (t3) {$q''$};
	\node [box, above=0.6cm of x1, fill=orange] (t4) {$q'''$};
	\node [above=0cm of t1] (l1) {$u$};
	\node [above=0cm of t2] (l2) {$v$};
	\node [above=0cm of t3] (l3) {$v'$};
	\node [above=0cm of t4] (l4) {$v''$};
	
	\path [->, draw, rounded corners](t2.east) -- ++(0.3, 0.0) -- (2, 0.0) -- (c1.west);
	\path [->, draw, rounded corners](t3.east) -- ++(0.3, 0.0) -- (3.35, 0.0) -- (x1.west);
	
	// -----
	
	\node [box, fill=purple, right=1.6cm of x1] (a11) {$u$};
	\node [box, minimum height=1.3em, below=0 of a11] (a22) {};
	\node [box, below=0 of a22, pattern=north east lines,pattern color=orange] (a33) {$q'''$};
	\node [left=0 of a33] (ap1) {$\bar{p} = p$};
	\node [box, minimum height=1.3em, below=0 of a33] (a44) {};
	
	\node [box, right=0.7cm of a11,fill=green] (b11) {$v$};
	\node [above=0 of b11] (qv1) {$q$};
	
	\node [box, right=0.7cm of b11, fill=lightblue] (c11) {$v'$};
	\node [above=0 of c11] (c22) {$q'$};
	
	\node [box, right=0.7cm of c11, fill=orange] (x11) {$v''$};
	\node [above=0 of x11] (x22) {$q''$};
	\node [box, minimum height=4.2em, below=0 of x11] (x33) {};
	
	\node [box, above=0.6cm of a11,fill=purple] (t11) {};
	\node [left=0.2cm of t11] (t00) {$T$};
	\node [box, above=0.6cm of b11, pattern=north east lines,pattern color=purple] (t22) {$p$};
	\node [box, above=0.6cm of c11, pattern=north east lines,pattern color=green] (t33) {$v$};
	\node [box, above=0.6cm of x11, pattern=north east lines,pattern color=lightblue] (t44) {$v'$};
	\node [above=0cm of t11] (l11) {$u$};
	\node [above=0cm of t22] (l22) {$v$};
	\node [above=0cm of t33] (l33) {$v'$};
	\node [above=0cm of t44] (l44) {$v''$};
	
	\path [<-, draw, dashed, rounded corners](a33.east) -- ++(0.3, 0.0) -- (6.95, 1.1) -- (t22.west);
	\draw[<-, dashed] (t33) edge (t44);
	\draw[<-, dashed] (t22) edge (t33);

	\end{scope}
	\begin{scope}[-{Stealth[length = 2.5pt]}]%
	\end{scope}
	\end{tikzpicture}
	\caption{Left: A path $u, v, v', v'' \in V = \{1, \ldots, n\}$ with $u$ and $v''$ as vertices of degree at least two and $v$ and 
		$v'$ of degree one. Right: Situation after visiting every vertex on the path $(u, v, v', v'')$.
		The first adjacency entry of each vertex is the name of the predecessor or a pointer in its adjacency array.
		The first adjacency pointer of $v''$ is stored at $\bar{p}$.}
	\label{fig:oneChain}
\end{figure}
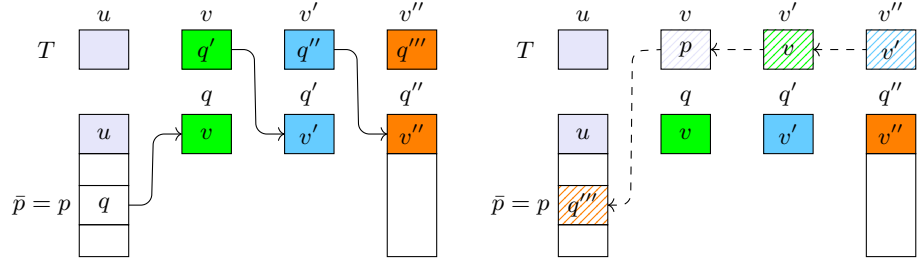

What we can observe is that the only proceeding step after visiting a
vertex $v$ of degree one is to follow $v$'s outgoing edge to the next white
vertex or, if no such edge exists, to backtrack.
Hence, we do not require to visit such a vertex adjacency array again
(because the only existing neighbor is already visited), but need to
backtrack over such a vertex to a previous vertex of degree at least two (or
to the start vertex).
The idea is that vertices visited from vertices of degree one do not store a 
reverse pointer pointing to the position where we 
left from, but store the vertex name of the vertex of degree one where they are visited from.
Having stored the previous vertex enables the algorithm to call \texttt{postprocess} while backtracking over vertices of degree one.

To recognize a vertex of degree one as visited we further extend the first invariant 
to our complete invariant for all vertex degrees: A vertex $v \in V = \{1, \ldots, n\}$ is white exactly if the following equation holds.
\begin{align*}\label{eq:white}
(\underbrace{T[v] = v}_{\deg (v) \rm{\: = \: 0}} \lor \underbrace{T[v] > n}_{\deg (v) \rm{\: = \: 1}}) \land \underbrace{1 \le A[T[v]] \le n}_{\deg (v) \rm{\: \ge \: 2}}
\end{align*}

When backtracking we are not able to restore the pointers, but we restore the pointers after the DFS during an extra restoration
described in the next subsection.

\full{In detail, we handle vertices of degree one as follows:}
Now we consider a vertex $u$ of degree at least two and a position $p$ in $u$'s adjacency array that stores a 
pointer $q = A[p]$ to the adjacency array of a vertex $v=A[q]$ of degree one.
See also Fig.~\ref{fig:oneChain}.
We use a {\em local temporary variable}~$q^*$ to remember the pointer $q' = T[v]$ to
a next white vertex $v' = A[q']$ 
and---as usual---create a reverse pointer 
by setting $T[v] = p$ that points back to the position $p$. 
Moreover, we remember 
in a {\em global
temporary variable} $\bar{p} = p$ 
until we reach a vertex of degree at least two (where we have to replace some pointer $q'''$ by a reverse pointer. Since we do not want to lose $q'''$, we store it
at position $p$---in some sense, we use our usual rule
after contracting induced paths).
Now $v'$ can be of three types: A vertex of degree zero, of degree one, or of degree at 
least two.

A white vertex $v'$ is of degree zero if the condition $A[v'] = v'$ holds.
If not, take $q^*$ as the first position in $v'$s adjacency array.
Then, $v'$ is of degree one exactly if it is not of degree zero and 
$1 \le A[q^*+1] \le n$ holds, i.e., at position $q^* + 1$, a new adjacency array starts.
Otherwise, the vertex is of degree at least two. 

We handle vertices of degree zero as described above.
If $v'$ has degree one, we store the next pointer $q'' = T[v']$ in the local temporary variable $q^*$ and create the reverse pointer $T[v'] =~v$.
In Fig.~\ref{fig:oneChain} the vertex $v$ turns gray-black because the third predicate of our invariant becomes false and $v', v''$ turn gray-black because the second predicate becomes false.
Note that we can not store the pointer $q^*$ inside 
$A[q]$ since it is the first adjacency entry of $v$.

If we reach a vertex $v'' = A[q'']$ of degree at least two, we first
read the pointer $q''' = T[v'']$, remember it in $q^*$ and set a reverse pointer $T[v''] = v'$. 
Now we have to store $q'''$, but not in the previous vertex since it is of degree one. 
Instead, we store it at the remembered position $\bar{p}$ of the previous vertex of degree at least two, i.e., we set $A[\bar{p}] = q^*$ (in the example $q^* = q'''$).

Now, whenever we have to access $q'''$ we have to 
backtrack to the position $\bar{p}$ that stores the pointer. Since we have to access this pointer only two times (whenever we need to compare the first two pointers of a vertex), the running time is still linear.
After visiting a vertex of degree at least two, we can forget pointer $\bar{p}$ again.

It remains to remark that,
if a vertex of degree one is a start vertex, we use a
global variable so that we do not need to store a pointer of another vertex $v$ in its adjacency array to 
create a reverse pointer from a vertex $v$ to the start vertex. 
\\
}

\full{\subsection{Restoration}}
\full{After running the DFS, we need to restore the representation.}
\full{The restoration of vertices of degree at least two is simple. 
Let $v$ be a vertex that points with $T[v]$ into 
the adjacency array of 
a vertex of degree at least two. By the third invariant, $v$
points with its first adjacency position at the second adjacency entry of another vertex, i.e., to restore the 
swapped begin-pointer representation of such a vertex set $T[v] = T[v] - 1$.

It remains to restore 
entries in adjacency arrays that either belong to 
degree-zero vertices or that are part of a chain of degree-one vertices. 
For the restoration of vertices of degree zero, we have to undo the 
changes shown in Fig.~\ref{fig:visitDegZero}.
Every vertex $v$ of degree zero has a reverse pointer into the adjacency position of a vertex $u$ 
from where $v$ was discovered and $u$ still points at $v$, i.e., $u$ and $v$ create a loop or $v$ 
points at a position $p + 1$ where $p$ is the first adjacency position of $u$ (happens if 
$v$ was discovered from the first adjacency position of $u$ that was swapped with the second).

To restore the state of $v$ iterate over the adjacency arrays of all vertices and whenever encountering a position 
$p > n$ with $v = A[p]$ with $v \le n$, we may have found a pointer to an adjacency 
array of degree zero. We found a loop exactly if $T[v] = p \land A[p] \ge n$ ($u$ has degree at least two) or if $p$ is the start of the adjacency array and $v$ points 
at the second position, i.e, $u = 
A[p] \land T[v] = p + 1$, or if $T[v] = u \land  1 \le u \le n$ ($u$ has degree 1). For 
all cases we restore the state by setting $T[v] = v$.

To restore the state of vertices that are involved in a chain of degree-one vertices (recall 
Fig.~\ref{fig:oneChain}), we have to reverse the reverse pointers since we have not done it during 
the backtracking steps of the DFS to keep the vertices gray-black.
To run the restoration we iterate over all adjacency arrays to find a
pointer with a value $v'$ with $1\le v'\le n$ and $v'$ is a vertex of degree
1. Let $v''$ be the vertex whose adjacency array contains the pointer.
%
Then 
follow the reverse pointers to further vertices $v$ of degree one until a
vertex $u$ of degree at least two is reached. 
In each step we reverse the reverse pointer. 
Since we cannot find the right position of a vertex adjacency name, we do not restore the swapped 
begin pointer representation completely. Instead, store only vertex names (instead of pointers to those vertices) such that we 
can harmonize all by computing a sorted standard representation in a next step.
Moreover, move the pointer $q'''$ from $u$ back to $v''$ as shown in
Fig.~\ref{fig:oneChain}.

%
%

After these steps we have restored the direction of the pointers, but still use a vertex name instead of a 
pointer. 
Finally, run a transformation from a begin pointer representation to a sorted standard representation, but 
ignore the entries in the adjacency arrays that are already at most $n$ since these 
are already restored.\\
}

\full{%
	The extensions due to the vertices of degree zero or one do not change 
        the linear asymptotic running: Each such vertex 
        can be handled in $O(1)$ time if we ignore the steps to follow a
        chain of consecutive vertices of degree one
        from a vertex $u$ of degree at
least two to another vertex $v''$ of degree at least two---recall
        Fig.~\ref{fig:oneChain}.
        The chains are used whenever we access 
        $v''$'s first pointer, which is 
        temporary stored in $u$'s adjacency array. This happens only 3
        times (when checking the order of the first and the second pointer originally belonging to $v''$'s adjacency array).
%
To bound the total time used on that chains, 
we can observe that 
the vertices in the chains are disjoint
and therefore the time is $O(n)$.  In a last step we reconstruct the
representation where we iterate a constant number of times 
over the whole array $A$ consisting of $O(n
+ m)$ words.  Altogether, the runtime sums up to $O(n + m)$.\\
}

\begin{theorem}
	 There is an in-place DFS for (un)directed graphs on the weak restore word RAM that runs in $O(n + m)$ time
 on $n$-vertex $m$-edge graphs on our sorted standard representation consisting of $n+m+2$ words ($n + 2m + 2$ words) and
 supports
 calls of the user defined functions \texttt{pre-} and
 \texttt{postprocess}.
\end{theorem}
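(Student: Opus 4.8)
The plan is to assemble the machinery developed above into a single pipeline: first convert the input, then run the core search, and finally restore. Given the graph in the sorted standard representation, I would first apply the transformation of the begin-pointer lemma to obtain the begin-pointer representation and then the swap operation of the swap/unswap lemma to obtain the swapped begin-pointer representation; both run in-place in linear time. On this representation I run the routines \textsc{visit}, \textsc{nextNeighbor}, \textsc{follow}, and \textsc{backtrack} described above, issuing \texttt{preprocess} on first discovery and \texttt{postprocess} on backtracking, so that exactly the two advertised user-implemented functions are supported. After the search completes I run the restoration, which---by the degree-$\ge 2$ argument using the third invariant, together with the explicit degree-$0$ and degree-$1$ clean-up and a final begin-pointer-to-sorted-standard transformation---returns the array to exactly its original sorted standard contents, establishing the weak-restore guarantee.

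For the running time and space I would argue as already sketched: every routine performs only a constant number of array reads, writes, and arithmetic operations per invocation, so each call costs $O(1)$ time and uses $O(1)$ working words. The routines \textsc{visit} and \textsc{backtrack} are invoked once per vertex, hence $O(n)$ times, while \textsc{nextNeighbor}, \textsc{isWhite}, and \textsc{follow} are charged to edges and invoked $O(m)$ times; summing gives $O(n+m)$. Because all recursive calls are tail-calls and \textsc{isWhite} is non-recursive, no explicit recursion stack is needed, so the working memory stays $O(1)$ words. The two ends of the pipeline---the representation transformations and the restoration---each make a constant number of passes over the $O(n+m)$-word array and so do not change the asymptotics. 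The only non-obvious contribution is the handling of chains of degree-one vertices, but since the access to a chain's far endpoint happens a constant number of times per chain and the chains are vertex-disjoint, their total cost is $O(n)$.

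The heart of the argument, and the step I expect to be the main obstacle, is correctness: proving that the three stated invariants are preserved by every call of \textsc{follow} and \textsc{backtrack} (and left untouched by \textsc{nextNeighbor}), and that these invariants suffice both to drive the search correctly and to make the post-search restoration unambiguous. I would proceed by induction over the sequence of calls, checking in each case that the implicit white/gray-black coloring (invariant~1), the reverse-pointer discipline along the active path (invariant~2), and the second-position convention for off-path gray-black vertices (invariant~3) are re-established. The delicate points are the rule that we never leave a vertex from its first adjacency position---which forces the swap of $T[u]$ with the second entry and its later undoing---and the degree-one chains, where reverse pointers are deliberately not undone during backtracking and the displaced pointer $q'''$ is parked at the position $\bar p$. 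I would verify that each modification made during the search is individually reversible and that the restoration can locally recognise which modification was applied (loop versus reverse pointer versus swapped first entry versus degree-one chain) from the values it reads, so that the final begin-pointer-to-sorted-standard conversion reproduces the original input exactly. The undirected case needs no new idea: its representation has the same structure with $2m$ adjacency entries in place of $m$, and every invariant and routine applies verbatim, yielding the bound $n+2m+2$ words.
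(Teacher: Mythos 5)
Your proposal is correct and follows essentially the same route as the paper: the transformation pipeline (sorted standard $\to$ begin-pointer $\to$ swapped), the routines \textsc{visit}, \textsc{nextNeighbor}, \textsc{follow}, \textsc{backtrack} governed by the three invariants, the $O(n)$/$O(m)$ charging argument with tail-calls only, and the post-search restoration including the vertex-disjoint degree-one chains bounded by $O(n)$. The only difference is presentational---you flag the invariant-preservation induction as the main obstacle where the paper leaves it implicit in the algorithm description---so nothing of substance diverges.
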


If $O(n(n + m))$ time is allowed, we can support \texttt{pre-} and
\texttt{postexplore}:
Whenever backtracking from a vertex $v$ to a vertex $u$ we know $v$'s name and return to
a position $p$ in $u$'s adjacency entry. 
Thus, $O(n)$ time allows us to  
lookup the vertex name $u = A[q]$ by searching for the largest $q < p$
with $1 \le A[q] \le n$. 
\section{Breadth-First Search}

As usual for a BFS,
our algorithm runs in rounds and, in round $z-1$ with $z \in \Nat$, all vertices of distance $z$
from a start vertex are added into a new list. Then our algorithm can always iterate
through a list of vertices and for each such vertex~$u$, we iterate through
$u$'s adjacency array. 
%
For a simpler description, 
assume that all vertices are initially white\full{\ and whenever a vertex is added 
into the BFS tree, then it turns light-gray}.\full{\ If we are in the round where
the vertex is processed, the vertex is dark-gray.} 
After adding $u$'s white neighbors
into a list for the next BFS round, the vertex turns black. 

%
To implement our BFS we make use of the following observation.
In the sorted standard representation all words in the table $T$ are stored in ascending 
order.
Our idea is to partition $T$ in regions such that the most significant bits
of the words are equal per region.
We use this to create a \textit{shifted representation} of $T$ by ignoring the 
most significant bits and shifting the words in $T$ together 
(Lemma~\ref{lem:encode}) such that we have a linear number
 of bits free to store a 
$c$-color choice dictionary~\cite{Hag18,HagK16,KamS18c}
as demonstrated in Fig.~\ref{fig:shifted}.
\full{We encapsulate the read access to the words stored packed in $T$ through 
a new data structure $\mathcal{T}$. 
The details of the access are described in the proof of 
Lemma~\ref{lem:encode}.}
%
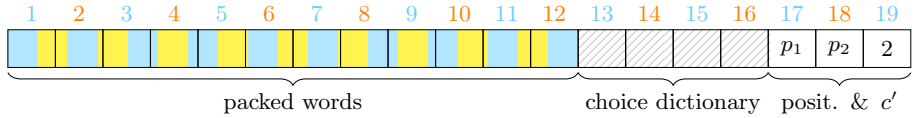
\begin{figure}[h]
	\centering
	\resizebox {\columnwidth} {!} {
		\begin{tikzpicture}[
	start chain = going right,
	node distance = 0pt,
	inner/.style={minimum width=1.265em, minimum height=1.6em,outer 
		sep=0pt}],
	\begin{scope}[local bounding box=scope1]

	\node[fill=lightblue!50, inner] at (-1.2mm, 0) (a) {};
	\node[fill=yellow!80, inner, right = of a] (b) {};
	\node[fill=lightblue!50, inner, right = of b] (c) {};
	\node[fill=yellow!80, inner, right = of c] (d) {};
	\node[fill=lightblue!50, inner, right = of d] (e) {};
	\node[fill=yellow!80, inner, right = of e] (f) {};
	\node[fill=lightblue!50, inner, right = of f] (g) {};
	\node[fill=yellow!80, inner, right = of g] (h) {};
	\node[fill=lightblue!50, inner, right = of h] (a1) {};
	\node[fill=yellow!80, inner, right = of a1] (b2) {};
	\node[fill=lightblue!50, inner, right = of b2] (c3) {};
	\node[fill=yellow!80, inner, right = of c3] (d4) {};
	\node[fill=lightblue!50, inner, right = of d4] (e5) {};
	\node[fill=yellow!80, inner, right = of e5] (f6) {};
	\node[fill=lightblue!50, inner, right = of f6] (g7) {};
	\node[fill=yellow!80, inner, right = of g7] (h8) {};
	\node[fill=lightblue!50, inner, right = of h8] (a2) {};
	\node[fill=yellow!80, inner, right = of a2] (b3) {};
	\node[fill=lightblue!50, inner, right = of b3] (c4) {};
	
	\node[inner, right = of c4, minimum width=8em,very 
	thin,pattern=north east lines,pattern color=gray!40] (cd) {};
	
	\fieldt{1}{}{\color{lightblue}1}
	\fieldt{2}{}{\color{orange}2}
	\fieldt{3}{}{\color{lightblue}3}
	\fieldt{4}{}{\color{orange}4}
	\fieldt{5}{}{\color{lightblue}5}
	\fieldt{6}{}{\color{orange}6}
	\fieldt{7}{}{\color{lightblue}7}
	\fieldt{8}{}{\color{orange}8}
	\fieldt{9}{}{\color{lightblue}9}
	\fieldt{10}{}{\color{orange}10}
	\fieldt{11}{}{\color{lightblue}11}
	\fieldt{12}{}{\color{orange}12}
	\fieldt{13}{}{\color{lightblue}13}
	\fieldt{14}{}{\color{orange}14}
	\fieldt{15}{}{\color{lightblue}15}
	\fieldt{16}{}{\color{orange}16}
	\fieldt{17}{$p_1$}{\color{lightblue}17}
	\fieldt{18}{$p_2$}{\color{orange}18}
	\fieldt{19}{2}{\color{lightblue}19}
	
	\nameFields{1}{12}{packed words\y}
	\nameFields{13}{16}{choice dictionary\y}
	\nameFields{17}{19}{posit.\, \&\, $c'$\y}
	\end{scope}
\end{tikzpicture}	
	}
	\caption{Shifted representation with $c$-color choice dictionary.}
	\label{fig:shifted}
\end{figure}
\begin{lemma}\label{lem:encode}
	Let $c > 0$ be a constant and $n\ge 2^{c+1}w$ be an integer. Having an array of 
        $n$
        ordered words 
	we can pack it in linear time with an in-place algorithm such that we have 
	$cn$ unused bits free  
        and that we still 
	can access all elements of the array in constant time.
        \conf{Afterwards, we can similarly unpack the words.}
        \full{With a similar linear-time in-place algorithm, we can unpack the
        words.}
\end{lemma}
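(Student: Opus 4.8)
The plan is to exploit that the $n$ words are sorted, so their high-order bits change only rarely. I fix $k=c+1$ and split every word into its top $k$ bits, its \emph{label}, and its low $w-k$ bits. Since the words are non-decreasing, the labels are non-decreasing, and since only $2^{k}=2^{c+1}$ labels exist the array decomposes into at most $2^{c+1}$ maximal \emph{regions} of consecutive words sharing one label. The packed representation keeps only the $w-k$ low bits of each word, stored consecutively as a bit string, together with a tiny \emph{directory} that records, for each of the $2^{c+1}$ labels $j$, the value $\mathrm{start}[j]$ = the index of the first word whose label is at least $j$ (equivalently, the number of words with label smaller than $j$). This matches the layout of Fig.~\ref{fig:shifted}: packed low bits in front, the freed block in the middle, and the directory ($p_1,p_2,\dots$ together with $c'=k$) at the back.

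For the bit budget, the low bits occupy $n(w-k)=n(w-c-1)$ bits, so discarding the labels frees $n(c+1)$ bits. The directory consists of exactly $2^{c+1}$ indices in $[1,n]$, each stored in one word, i.e.\ $2^{c+1}w$ bits; by the hypothesis $n\ge 2^{c+1}w$ this is at most $n$ bits. Hence the net number of free bits is at least $n(c+1)-2^{c+1}w\ge n(c+1)-n=cn$, as claimed, and these free bits form a single contiguous block lying between the packed low bits and the directory.

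Constant-time access is where the constancy of $c$ is essential. To read word $i$ I first locate its region by scanning the at most $2^{c+1}=O(1)$ directory entries for the largest $j$ with $\mathrm{start}[j]\le i$; this $j$ is exactly the top-$k$ label of word $i$, so no separate label storage is needed. I then read the $w-k$ low bits of word $i$ from bit offset $(i-1)(w-k)$ of the packed string; this field spans at most two machine words and is extracted with $O(1)$ shifts and masks. Finally I reassemble the word by placing $j$ into the top $k$ bits, i.e.\ $j\ll(w-k)$ bitwise-or'd with the low bits. Every step touches only constantly many words, so access is $O(1)$.

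To pack in place I make one left-to-right pass that detects each label change and writes the directory into the $O(1)$ trailing words, then a second pass that reads word $i$ and appends its low bits to the output bit string; because the output is shorter than the input, the machine word receiving word $i$'s low bits has index $\lfloor (i(w-k)-1)/w\rfloor<i$, so I never overwrite a word before reading it. Unpacking reverses this, processing $i=n,\dots,1$ so that the now-longer output again never clobbers packed bits still to be read (the bits occupying word $i$ belong to already-expanded words $i'>i$), and it ends by clearing the directory. The idea is straightforward; the main obstacle is the bookkeeping of this in-place bit shuffle---verifying collision-freeness of the read and write heads in both directions and correctly handling low-bit fields that straddle machine-word boundaries---together with confirming that the directory overhead is swallowed exactly by the hypothesis $n\ge 2^{c+1}w$.
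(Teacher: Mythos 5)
Your construction is essentially the paper's: you partition by the top $c'=c+1$ bits (possible because the words are sorted, so there are at most $2^{c'}$ maximal regions), pack the low $w-c'$ bits contiguously, store the region boundaries together with $c'$ in the last $2^{c'}$ words, and compute the budget $c'n-2^{c'}w\ge c'n-n=cn$ from the hypothesis $n\ge 2^{c+1}w$. Your access routine also matches the paper's \textsc{read}: locate the field at bit offset $(i-1)(w-c')$, which spans at most two machine words, and recover the top bits by scanning the constant-size directory. Storing start indices per label rather than the positions where the label changes is an immaterial variant.

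There is, however, one concrete flaw in your in-place schedule, and it is exactly the kind of collision you announced you would rule out: you keep the directory \emph{inside the array} while passes that overwrite the array are still running. In packing, you write the directory into the trailing $2^{c+1}$ words after the first pass and before the second, which destroys the low bits of words $n-2^{c+1}+1,\ldots,n$ before they are ever appended to the packed bit string; your collision argument compares only the bit-string write head against the word read head and misses this. Symmetrically, in unpacking you process $i=n,\ldots,1$ and write the reconstructed word $n$ over a directory word, yet every subsequent label lookup still needs the full directory --- and the tail entries, which correspond to the highest labels, are precisely the ones the last (largest) words require. Both failures have the same one-line repair, which is what the paper does explicitly: since $c$ is a constant, the directory occupies $O(1)$ words, so hold the positions in working memory during the packing pass and write them to the tail only once packing is complete, and copy them back into working memory before starting the backward unpacking pass. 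With that correction your argument is sound and coincides with the paper's proof.
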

\begin{proof}
	The idea is to partition the array into parts such that each pair of words 
	in a part has the same $c'=c+1$ significant bits.
	Since the sequence is ordered, we iterate over all words and look for the 
	positions where one of the most $c'$ significant bits change.
	\full{During the construction we remember all these $2^{c'}-1$ positions in the working
	memory.} 
	
	Now, the most significant $c'$ bits of each word are equal per region.
	We treat them as unused space. 
%
%
        If we store the remaining $(w - c')$ bits of all words consecutively,
%
        they occupy $n(w - c')$ bits in total such that it leaves
	$c'n$ bits free to use. 
        We use the last 
        $2^{c'}$
        words to store $c'$
        and all the positions. Thus, 
        $c'n - 2^{c'}w \ge c'n - n = cn$
        bits remain free.
	
	For implementing a function \textsc{read}($i \in \{1, \ldots, n\}$) that reads the $i$th
        original word, we have to identify its current 
	position that can be distributed between two words, to cut its bits out of the
	two words and to use the remembered position to reconstruct its most 
	significant bits. For the following description assume that the bits
        of a word are numbered from $0$ (least significant) to $w-1$ (most
        significant). \full{

	}In detail, the $i$th word in $T$ originally stored at bit position $w(i-1)$ was shifted exactly $c(i-1)$ bits
        and now starts after $x=(w-c)(i-1)$ bits, i.e., it starts with bit
        $y=(x\mod w)$ in the word $((x \,{\mathrm{div}}\, w)-1)$ and consists of the next $w-c$
        bits. Using suitable shift operations we can get the $i$th word in constant
        time. To reconstruct its most 
        significant bits, scan over the last $c'$ words to determine the part to
        which $i$ belongs.
%
%
%
%
%
\full{

To restore to the sorted standard representation of the array, 
we store $c'$ and the positions in the working memory. Afterwards, 
we iterate over the words backwards 
and set $T[i] = \textsc{read}(i)$ for all $i\in \{1, \ldots, n\}$.}%
\qed
\end{proof}
%

Before we now obtain our linear-time BFS, we want to remark that the shifted
representation cannot be used to run a standard DFS in-place since a stack for the
DFS can require $\Theta(n\log n)$ bits on $n$-vertex graphs and that many bits are not free in the shifted
representation.

We first prepare the shifted representation of our 
graph (Lemma~\ref{lem:encode}).
Then we can use the free bits to implement a $c$-color choice dictionary
in which we  
store the colors of the vertices, and to iterate over 
colored vertices
in constant time per vertex.
The $c$-color choice dictionary provides the 
following 
functions.
\begin{itemize}
	\item \textsc{setColor}($v, q$): Colors an entry $v$ with the color~$q \in \{0, \ldots, c - 1\}$.
	\item \textsc{color}($v$): Returns the color of the entry~$v$.
	\item \textsc{choice}($q$): Returns an (arbitrary) entry that has the color $q \in \{0, \ldots, c - 1\}$.
\end{itemize}

To start our BFS at vertex $v$, we first initialize a $c$-color choice dictionary 
$D$ for four colors \{{\sc white, light-gray, dark-gray, black}\} with all vertices being initially white. 
Remember in a global variable a round counter $z = 0$ to output the 
round number for each vertex. Then, color the root vertex $v$ light-gray by 
calling $D$.\textsc{setColor}($v, \textsc{light-gray}$).
Finally, we start to process the whole DFS-tree as follows.

Whenever the current round counter $z$ is even, the idea is to iterate over the light-gray 
vertices and color their white neighbors dark-gray and if $z$ is odd 
we do vice versa. 
We next explain the details for the case where $z$ is even.
For an odd $z$, simply switch the words light-gray and dark-gray below.

As long as there is a light-gray vertex $v = D.\textsc{choice}(\textsc{light-gray})$, we output $(v, z)$, color $v$'s white neighbors dark-gray, and color $v$ black.
To color the neighbors we iterate over $v$'s 
adjacency array starting at position $p = \mathcal{T}[v]$ and ending at 
$q = \mathcal{T}[v + 1] - 1$ where we define $\mathcal{T}[n + 1]=n + m + 2$
as the end of our 
graph representation.
For every neighbor $u = A[j]$ with $p \le j \le q$ we check if 
$D.\textsc{color}(u) = \textsc{white}$ and if so, we color $u$ dark-gray by 
calling $D.\textsc{setColor}(u, \textsc{dark-gray})$, otherwise we ignore it.
After the iteration over $v$'s adjacency array we call $D.\textsc{setColor}(v, \textsc{black})$. Since $v$ is now black, the 
next call of $D.\textsc{choice}(\textsc{light-gray})$ returns the next light-gray
\mbox{vertex
if one exists.}

If we could color a vertex dark-gray during the current iteration over the light-gray vertices, 
then 
there are vertices left to process: We increase $z$ by one
and start a new round by iterating now over the 
dark-gray colored vertices as described.
Otherwise, the BFS finishes.

By Lemma~\ref{lem:encode}, we can restore to the sorted standard representation.

\begin{theorem}
There is an in-place BFS for (un)directed graphs on the weak restore word RAM that runs in 
$O(n + m)$ time on $n$-vertex $m$-edge graphs on our sorted standard representation 
consisting of $n+m+2$ words ($n + 2m + 2$ words).
\end{theorem}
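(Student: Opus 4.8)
The plan is to verify, in turn, correctness, running time, the $O(1)$-word working-memory bound, and restorability, reusing the algorithm and the packing already described. First I would set up the colour store. Since the table $T$ of the sorted standard representation holds $n$ words in ascending order, Lemma~\ref{lem:encode} applies (choosing the constant $c$ large enough, e.g.\ $c=2$) and packs $T$ in linear time while leaving $cn$ free bits and still granting $O(1)$-time access to every $T[v]$ through $\mathcal{T}$. Into these free bits I would install the four-colour choice dictionary $D$ of \cite{Hag18,HagK16,KamS18c}, initialised with all vertices \textsc{white} in constant time. (In the corner case $n<2^{c+1}w$ where Lemma~\ref{lem:encode} does not apply, $n=O(w)$, so the $2n=O(w)$ colour bits fit in $O(1)$ working-memory words and $\mathcal{T}=T$; the remainder of the argument is unchanged.) Note that the adjacency part $A$ is only read by the BFS, never written.

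For correctness I would argue by induction on the round number $z$ that, at the start of round $z$, the vertices at BFS-distance exactly $z$ from the start vertex carry the ``current'' gray colour (\textsc{light-gray} if $z$ is even, \textsc{dark-gray} if $z$ is odd), every vertex of smaller distance is \textsc{black}, and all others are \textsc{white}. The base case $z=0$ holds because only the start vertex is coloured light-gray. For the step, draining the current colour via repeated $D.\textsc{choice}$ visits each distance-$z$ vertex once; scanning its adjacency array from $\mathcal{T}[v]$ to $\mathcal{T}[v+1]-1$ and recolouring each \textsc{white} neighbour with the ``next'' gray colour deposits exactly the distance-$(z+1)$ frontier, while colouring $v$ \textsc{black} removes it from the current layer. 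The key invariant that makes this double-buffering sound is that the two active layers always use disjoint colours, so $D.\textsc{choice}$ of the current colour never returns a vertex of the next layer; thus each vertex is output once, paired with the correct distance $z$, and the loop terminates precisely when no new frontier vertex was produced.

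Running time and space come next. The packing and, at the end, the unpacking of Lemma~\ref{lem:encode} each cost $O(n)$ time. In the main loop every vertex is returned by $D.\textsc{choice}$ exactly once and then, when processed, contributes a scan of its adjacency array of length $\deg(v)$; since all choice-dictionary operations and all accesses $\mathcal{T}[v]$ run in constant time, the total is $O\!\left(n+\sum_{v}\deg(v)\right)=O(n+m)$ (with $2m$ replacing $m$ for undirected graphs, the algorithm being oblivious to orientation because it consults $D$ before recolouring). The only working memory is the round counter $z$, a handful of loop indices and positions, and the bookkeeping internal to $D$ and to Lemma~\ref{lem:encode}, all $O(1)$ words; the per-layer ``lists'' are represented implicitly by colour classes rather than stored.

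Finally, restorability: the adjacency arrays $A$ were never modified, and by Lemma~\ref{lem:encode} the packing of $T$ can be undone in linear time, which simultaneously overwrites the free bits that held $D$. Hence the sorted standard representation is recovered exactly, establishing the weak-restore guarantee. I expect the main obstacle to be the correctness of the alternating two-gray-colour scheme: one must check carefully that a single round counter plus four colours suffice to separate the current frontier from the next one without any explicit per-distance list, i.e.\ that the disjointness-of-active-colours invariant is maintained across the switch of roles at each round.
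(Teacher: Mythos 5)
Your proposal is correct and follows essentially the same route as the paper: pack the ascending table $T$ via Lemma~\ref{lem:encode}, place a four-colour choice dictionary in the freed bits, run the alternating light-gray/dark-gray double-buffered rounds over $\mathcal{T}$ with $\mathcal{T}[n+1]=n+m+2$, and restore by unpacking, with $A$ never written. Your explicit layer-disjointness induction and the treatment of the corner case $n < 2^{c+1}w$ are details the paper leaves implicit, but they do not constitute a different approach.
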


\section{Conclusion}
We showed linear-time in-place algorithms for DFS and BFS on the weak restore word RAM 
that have the same asymptotic running time as the standard algorithms. 
To evaluate the usability in practice we implemented the folklore and the
linear-time 
in-place DFS. The implementations are published 
on GitHub~\cite{KamS18g}.


%
%
{Even if we consider our graph representation to be economical in its space requirement, Farzan and 
Munro~\cite{FarM13} showed 
a 
succinct graph representation with constant access-time that requires only $(1 + \epsilon) 
\log {{n^2}\choose{m}}$ bits for any constant $\epsilon > 0$. An interesting open question is if it is 
possible to  
implement 
a (linear-time)
in-place algorithm for DFS or BFS by using the succinct graph representation of Farzan and Munro or 
one that 
requires a little more space.}

\subsection*{Acknowledgments}
Andrej Sajenko was funded by the Deutsche Forschungsgemeinschaft (DFG, German Research Foundation) -- 379157101.


\phantomsection
\addcontentsline{toc}{chapter}{Bibliography}
\bibliography{main}

%
%
%
%
%
%
%
%

\end{document}